\newtheorem{Proposition}{Proposition}
\begin{document}
\title{Application of Analog Network
Coding to MIMO Two-Way Relay Channel in Cellular
Systems
\thanks{M. Gan and X. Dai are with Department of Electronic Engineering
and Information Science, University of Science and Technology of China,
P.R.China. Z. Ding is with the School of Electrical, Electronic and Computer
Engineering, Newcastle University, UK.
   }}

\author{Ming Gan,   Zhiguo Ding,  \IEEEmembership{Member, IEEE}, Xuchu Dai
\thanks{}}

\twocolumn[
\begin{@twocolumnfalse}
\maketitle
\vspace*{-2 em}
\end{@twocolumnfalse}
]
{
\renewcommand{\thefootnote}{\fnsymbol{footnote}}
\footnotetext {M. Gan and X. Dai are with Department of Electronic Engineering
and Information Science, University of Science and Technology of China,
P.R.China. Z. Ding is with the School of Electrical, Electronic and Computer
Engineering, Newcastle University, UK. }
}


\begin{abstract}
An efficient analog network coding transmission protocol is proposed in
this letter for a MIMO two way cellular network. Block signal alignment is  first  proposed  to  null the  inter-user  interference  for
multi-antenna users, which makes the dimensions of  aligned space larger
compared with the existing signal alignment. Two algorithms are developed to jointly design
the precoding matrices at the relay and BS for outage optimization. Especially, the last algorithm is designed to maximize the effective channel gain to the effective noise gain ratio. The performance of this transmission protocol is also verified by simulations.

\begin{keywords}
MIMO two-way, beamforming, block signal alignment, outage optimization.
\end{keywords}
\end{abstract}

\vspace{-3 ex}

\section{Introduction}
Multiuser-MIMO two-way relaying technique has emerged as a promising technique to improve the spectrum
efficiency by applying multi-antenna techniques to eliminate
inter-user interference \cite{Esli}. To further reduce
the requirement of node antennas, the concept of signal alignment
has been applied to two-way relaying networks \cite{Ding}. Most
existing works about multiuser two-way networks are to design
precoding matrices by completely diagonalizing the channel
matrices. Compared to the concept of block diagonalization, it has
been recognized that perfect diagonalization results in some
performance loss \cite{Spencer}. Therefore, in this letter we will
develop a coordinated beamforming scheme to null the inter-user
interference using a new approach of block signal alignment (BSA),
and derive closed-form expressions for the beamforming matrices
based on orthogonal projection. This new
scheme is completely different from \cite{Ding} whose key idea was point
to point signal alignment (P2PSA) realized by perfect
diagonalization. BSA  makes the dimensions of  aligned space larger
compared with P2PSA such that it achieves better performance both in outage probability and ergodic capacity for multi-antenna
users.

Focusing on outage optimization for each user and applying the
beamforming structures designed at the relay and base station (BS), we first develop
one algorithm to jointly design the beamforming matrices at the
relay and BS based on MMSE and Algebraic Norm-Maximizing (ANOMAX)
\cite{Roemer}. Then we develop another algorithm based on Algebraic Norm-Maximizing without Normalization (ANMwoN), which is first
proposed in this letter to maximize the Frobenius norms of the effective channels without normalizing
beamforming matrices as in ANOMAX. Instead of only maximizing the effective channel gain in ANOMAX, the effective channel gain to the effective
noise gain ratio (ECG2ENGR) is maximized for the relay precoder. Simulation results demonstrate the performance
of the proposed transmission approaches.

\vspace*{-0.25 em}
\section{Description of the Protocol}\label{section description}
Consider a MIMO cellular network with  $M$ mobile users, one BS and a
single relay, where each node operates in half-duplex. Both the
relay and the BS are equipped with $N$ antennas. The $m$-th mobile
user has $K_m$ antennas, for $m = 1,\ldots, M$. The total number of antennas at all
mobile users is defined to be $K=\sum^M_{m=1}{K_m}$. All wireless
channels are assumed to be Rayleigh fading, reciprocal, and
constant over two transmission phases and there are no direct links
between the BS and mobile users due to severe shadowing. It is assumed that the
BS and relay have global channel state information (CSI) prior to
transmission.


Without loss of generality, we focus on the $m$-th user and the similar operations for other users. To achieve full multiplexing gain, it is assumed that each node transmits full data streams, equal to the number of its antennas. During the multiple access phase, the BS transmits the precoded symbols, $\mathbf{P}\mathbf{s}$,
where $\mathbf{s}=\begin{bmatrix} \mathbf{s}^T_1 &\cdots
&\mathbf{s}^T_M
\end{bmatrix}^T$, $\mathbf{s}_m\in\mathbb{C}^{K_m \times
1}$ denotes the message to the $m$-th
user with unit power and $\mathbf{P}$ is a $N\times K$ precoding
matrix. It is assumed that the transmission power for the $m$-th
user at the BS is $\text{P}^{ow}_m$, i.e.,
$\text{tr}\{\mathbf{P}\mathbf{P}^H\}\leq \sum^M_{m=1}
\mathbf{P}^{ow}_m$, where $\text{tr}\{\cdot\}$ denotes the trace.
Meanwhile, the $m$-th users sends its message vector $\mathbf{s'}_m
\in \mathbb{C}^{K_m \times 1}$ with unit power to the relay. Therefore, the observations at the
relay are given as
\begin{eqnarray}\label{1}
\mathbf{r} = \mathbf{G}\mathbf{P}\mathbf{s} + \sum_{m=1}^{M}\mathbf{H}_{m}\mathbf{s'}_m+\mathbf{n}_R,
\end{eqnarray}
where $\mathbf{G}\in \mathbb{C}^{N \times N}$($\mathbf{H}_{m}\in \mathbb{C}^{N \times K_m}$) is the full-rank channel matrix from the BS (the $m$-th user) to the relay,  $\mathbf{n}_R \in\mathbb{C}^{N \times 1}$ denotes additive white Gaussian noise vector whose components follow  $\mathcal{CN}\left(0, \sigma^2_r\right)$. During the broadcast phase, upon receiving $\mathbf{r}$, the relay amplifies it using the precoding matrix $\mathbf{W}$ and broadcasts the output of the precoder. The transmission power constraint for $m$-th user at the relay is $\text{P}^{ow}_{R_m}$.  Then the received signals at the BS  can be written as
\begin{eqnarray}\label{2}
\mathbf{y}_{BS} &=& \mathbf{G}^H\mathbf{W}\left( \mathbf{G}\mathbf{P}\mathbf{s} + \mathbf{H}\mathbf{s'}+\mathbf{n}_R\right)
+\mathbf{n}_{BS},
\end{eqnarray}
where $\mathbf{H}=\begin{bmatrix} \mathbf{H}_{1} &\cdots&\mathbf{H}_{M}\end{bmatrix}$ and $\mathbf{s'}=\begin{bmatrix} \mathbf{s'}^T_{1} &\cdots& \mathbf{s'}^T_{M}\end{bmatrix}^T$, and at the $m$-th user, the signal model can be expressed as
\begin{eqnarray}\label{3}
\mathbf{y}_m = \mathbf{H}_{m}^H\mathbf{W}\left( \mathbf{G}\mathbf{P}\mathbf{s} + \mathbf{H}\mathbf{s'}+\mathbf{n}_R\right) +\mathbf{n}_m,
\end{eqnarray}
where $\mathbf{n}_m\in\mathbb{C}^{K_m \times 1}$ and
$\mathbf{n}_{BS}\in\mathbb{C}^{N \times 1}$ are defined similarly
to $\mathbf{n}_R$, whose components follow $\mathcal{CN}\left(0,
\sigma^2_m\right)$ and $\mathcal{CN}\left(0, \sigma^2_{BS}\right)$,
respectively. Observe that each user receives a mixture of
desirable signals, self-interference (SI), interference
from the messages sent from and to other users and amplified
noise, where SI can be easily removed with the CSI and the last two terms will severely degrade the
performance.

\vspace*{-0.25 em}
\section{The Precoding Design at the BS and Relay}
Here we consider the situation of $N=K$.\footnote{The proposed  approach can be extended to the scenarios with  $N\leq K$ as follows.  The precoding design will be accomplished in two phases. The strategy introduced in Section III and IV will be carried out first, and then a further step is to optimize the user precoding matrices. The second step is analogous to antenna selection in traditional MIMO systems.   Specifically, the message vector  $\mathbf{s'}_m \in \mathbb{C}^{K_m \times 1}$ sent by the $m$-th user denotes $\mathbf{s'}_m=\mathbf{F}_m  \mathbf{x'}_m$, where the precoding matrix $\mathbf{F}_m \in \mathbb{C}^{K_m \times K'_m}$ can be set as the   $K'_m$  right singular vectors of $H_m$ or the end to end effective channel matrix corresponding to the largest singular values, and $\mathbf{x'}_m \in \mathbb{C}^{K'_m \times 1} $ is the coded signal. The parameter $K'=\Sigma^M_{m=1} K'_m$ is chosen to ensure that the effective channel matrix between the relay and the users  $\mathbf{H'}= \begin{bmatrix} \mathbf{H}_1 \mathbf{F}_1 & \cdots & \mathbf{H}_M \mathbf{F}_M \end{bmatrix} \in \mathcal{C}^{N \times K'}$ is non-singular, i.e.  ($N =K'$), which is used instead of $\mathbf{H}$. The reader is referred to \cite{Xin} for a more detailed discussion on the user precoding design.}
Instead of P2PSA proposed in \cite{Ding},  the concept of
block diagonalization is applied for the coordinated beamforming design, and  the key idea of this design is BSA that the observations from
and to the same mobile user can lie in the same space, i.e.,
$\mathbf{s}_m$ and $\mathbf{s'}_m$ aligned together. The dimensions of aligned space become larger since the two coefficient matrices before the two grouped signal vectors are different, not the identity matrices, after the coordinated beamforming design. This can be facilitated
by defining the precoder $\mathbf{P}$ at the BS to meet
\begin{eqnarray}\label{4}
\mathbf{H}^{-1} \mathbf{G}\mathbf{P} = \text{diag} \left( \begin{bmatrix}  \mathbf{A}_1 & \cdots & \mathbf{A}_M\end{bmatrix} \right),
\end{eqnarray}
where $\mathbf{P}=\begin{bmatrix}\mathbf{P}_1 & \mathbf{P}_2 &\cdots&
\mathbf{P}_M \end{bmatrix}$, $\text{diag} \left(\right)$ represents the block diagonal matrix, and $\mathbf{A}_m \in\mathbb{C}^{ K_m \times
K_m}$,  $m=1,2,...,M$ , is a full rank matrix. The use of the precoding matrix defined in \eqref{4} can convert the signal model in \eqref{1} to $\mathbf{r} =  \sum_{m=1}^{M}\mathbf{H}_{m}(\mathbf{A}_m \mathbf{s}_m+\mathbf{s'}_m)+\mathbf{n}_R$ , where the user signals and the signals from the BS are aligned together. The existing scheme in \cite{Ding} is a special case by letting all $\mathbf{A}_m$ to be identity matrices. To satisfy the condition \eqref{4}, the preorder $\mathbf{P}$ can be obtained by using orthogonal projection method.  First, define $\mathbf{G'}=\mathbf{H}^{-1}
\mathbf{G}$ and represent $\mathbf{G'}=\begin{bmatrix}
\mathbf{G'}^T_1 &\cdots& \mathbf{G'}^T_M \end{bmatrix}^T$. Then using
the orthogonal projection matrix
$\mathbf{Q}_m=\left(\mathbf{I}_{N}-\tilde{\mathbf{G'}}^H_m(\tilde{\mathbf{G'}}_m\tilde{\mathbf{G'}}^H_m)^{-1}\tilde{\mathbf{G'}}_m\right)$
where $\tilde{\mathbf{G'}}_m=\begin{bmatrix} \mathbf{G'}^T_1
&\cdots& \mathbf{G'}^T_{m-1} & \mathbf{G'}^T_{m+1} &\cdots&
\mathbf{G'}^T_M \end{bmatrix}^T \in\mathbb{C}^{ \left(K-K_m\right) \times
N} $, the precoder's component matrix $\mathbf{P}_m$ can be expressed as
\vspace*{-0.9 em}
\begin{eqnarray}\label{5}
\mathbf{P}_m = \mathbf{Q}_m \mathbf{G'}^H_m \mathbf{D}_m,
\end{eqnarray}
where the fact $\mathbf{G'}_i \mathbf{Q}_j= \mathbf{0}$ for $i\neq j$ is used and $\mathbf{D}_m$ is used to meet the power
constraint for the $m$-th user at the BS  as
\begin{eqnarray}\label{6}
\text{P}^{ow}_m = \text{tr}\left\{ \mathbf{P}_m \mathbf{P}^H_m \right\}=\text{tr}\left\{ \mathbf{T}_m \mathbf{D}_m \mathbf{D}^H_m \right\},
\end{eqnarray}
where $\mathbf{T}_m=\mathbf{G'}_m \mathbf{Q}_m \mathbf {G'}^H_m$, a $K_m \times
K_m$ matrix.
With the Hermitian matrix assumption for the power constraint matrix $\mathbf{D}_m$ in \eqref{6}, it can be computed via
\begin{eqnarray}\label{7}
\mathbf{D}_m  = \frac{\text{P}^{ow}_m}{\sqrt{K_m}} \left( \mathbf{T}_m \right)^{-\frac{1}{2}},
\end{eqnarray}
which yields a deterministic solution for the precoding matrices.
Obviously such a fixed solution is not optimal, and how to optimize
precoding will be discussed in the next section.
Substituting \eqref{4} into \eqref{3}, we obtain $\mathbf{y}_m = \mathbf{H}_{m}^H\mathbf{W}\left(\mathbf{H}\left(\text{diag}\left(\begin{bmatrix} \mathbf{T}_1 \mathbf{D}_1 &\cdots&  \mathbf{T}_M \mathbf{D}_M \end{bmatrix}\right) \mathbf{s} + \mathbf{s'}\right)+\mathbf{n}_R\right)+\mathbf{n}_m$. To remove the inter-user interference, the precoding matrix $\mathbf{W}$ at the relay should follow
\begin{eqnarray}\label{8}
\mathbf{H}^H\mathbf{W}\mathbf{H} = \text{diag} \left(\begin{bmatrix}  \mathbf{B}_1 & \cdots & \mathbf{B}_M\end{bmatrix} \right),
\end{eqnarray}
where $\mathbf{W}=\sum^M_{m=1}\mathbf{W}_m$, $\mathbf{B}_m$ is a full rank $K_m \times K_m$ matrix. Note that the condition \eqref{8} is not perfect diagonalization, enhancing the BSA. It can be realized by the combination of the orthogonal projection and eigen value decomposition (EVD). First define $\tilde{\mathbf{H}}_m=\begin{bmatrix} \mathbf{H}_1 &\cdots& \mathbf{H}_{m-1} & \mathbf{H}_{m+1} &\cdots& \mathbf{H}_{M}\end{bmatrix} \in \mathbb{C}^{ N\times \left(K-K_m \right)}$ with its orthogonal projection matrix is $\mathbf{Q'}_m= \left(\mathbf{I}_N-\tilde{\mathbf{H}}_m(\tilde{\mathbf{H}}^H_m\tilde{\mathbf{H}}_m)^{-1}\tilde{\mathbf{H}}^H_m\right)$ of idempotent property. Then using EVD, i.e., $\mathbf{Q'}_m = \begin{bmatrix} \mathbf{U}_{m_1} & \mathbf{U}_{m_0}\end{bmatrix} \begin{bmatrix} \mathbf{\Sigma}_m & 0 \\ 0 & 0 \end{bmatrix} {\begin{bmatrix} \mathbf{U}_{m_1} & \mathbf{U}_{m_0}\end{bmatrix}}^H $, where $\mathbf{U}_{m_1} \in \mathbb{C}^{ N \times K_m }$, the precoder's component matrix $\mathbf{W}_m$ can be computed via
\begin{eqnarray}\label{9}
\mathbf{W}_m=\mathbf{U}_{m_1} \mathbf{D'}_m \mathbf{U}^H_{m_1},
\end{eqnarray}
where the fact $\mathbf{U}^H_{i_1}\mathbf{H}_j =\mathbf{0}$ for $i\neq j$ is used and $\mathbf{D'}_m$ is to ensure the power constraint for messages transmission to the $m$-th user at the relay as
\begin{eqnarray}\label{10}
\text{P}^{ow}_{R_m} &=& tr\left\{ \mathbf{W}_m \mathbf{r} \mathbf{r}^H \mathbf{W}^H_m \right\}
\\&\approx&
tr\left\{  \mathbf{D'}^H_m  \mathbf{D'}_m \bar{\mathbf{T}}_m \left(\mathbf{T}_m \mathbf{D}_m \mathbf{D}^H_m \mathbf{T}^H_m +\mathbf{I}_{K_m}\right)\bar{\mathbf{T}}^H_m \right\}, \nonumber
\end{eqnarray}
where $\bar{\mathbf{T}}_m = \mathbf{U}^H_{m_1} \mathbf{H}_m$ and the last  approximation is due to the high signal-to-noise ratio (SNR)
assumption. With the Hermitian matrix assumption for
$\mathbf{D'}_m $, a deterministic solution of \eqref{10} can be expressed as
\begin{eqnarray}\label{11}
\mathbf{D'}_m  = \frac{\text{P}^{ow}_{R_m}}{\sqrt{K_m}} \left( \bar{\mathbf{T}}_m \left(\mathbf{T}_m \mathbf{D}_m \mathbf{D}^H_m \mathbf{T}^H_m +\mathbf{I}_{K_m}\right)\bar{\mathbf{T}}^H_m \right)^{-\frac{1}{2}}.
\end{eqnarray}
In the second phase, the relay nulls the inter-user interference using the fact $\mathbf{W}_i \mathbf{H}_j = \mathbf{0}$ and $\mathbf{H}^H_i \mathbf{W}_j = \mathbf{0}$, for $i\neq j $. Hence \eqref{3} can be rewritten as
\begin{eqnarray}\label{12}
\mathbf{y}_m & = & \mathbf{B}_m  \mathbf{A}_m \mathbf{s}_m + \mathbf{B}_m  \mathbf{s'}_m + \mathbf{H}^H_m \mathbf{W}_m  \mathbf{n}_R +\mathbf{n}_m
\\&=& \bar{\mathbf{T}}^H_m  \mathbf{D'}_m \bar{\mathbf{T}}_m  \mathbf{T}_m \mathbf{D}_m \mathbf{s}_m + \bar{\mathbf{T}}^H_m  \mathbf{D'}_m  \bar{\mathbf{T}}_m  \mathbf{s'}_m + \tilde{\mathbf{n}}_m,\nonumber
\end{eqnarray}
where  $\mathbf{B}_m=\bar{\mathbf{T}}^H_m  \mathbf{D'}_m
\bar{\mathbf{T}}_m $, $\mathbf{A}_m=\mathbf{T}_m \mathbf{D}_m$ and
$\tilde{\mathbf{n}}_m = \bar{\mathbf{T}}^H_m \mathbf{D'}_m
\mathbf{U}^H_{m_1} \mathbf{n}_R +\mathbf{n}_m$. As seen from
\eqref{12}, the $m$-th user only observes a mixture of the desired
signals, SI and enhanced noise, where the signals from and to other
users have been eliminated completely because of the careful
precoding design  at the BS and relay. After the cancellation of SI
, the mutual information at the $m$-th user and BS can be
calculated by
\begin{eqnarray}\label{13}
\mathcal{I}_m = \text{log} \text{det}\left( \mathbf{I}_{K_m} + \mathbf{F}_m\mathbf{F}^H_m \tilde{\mathbf{R}}^{-1}_m\right),
\end{eqnarray}
where $\mathbf{F}_m= \bar{\mathbf{T}}^H_m  \mathbf{D'}_m \bar{\mathbf{T}}_m  \mathbf{T}_m \mathbf{D}_m $ and $\tilde{\mathbf{R}}_m=\bar{\mathbf{T}}^H_m \mathbf{D'}_m \mathbf{D'}^H_m \bar{\mathbf{T}}_m \sigma^2_r + \sigma^2_m \mathbf{I}_{K_m}$ .

\vspace*{-0.5 em}
\section{Outage Optimization for each user}
Due to the poor capability of users who are equipped with less
antennas, we focus on the precoding design optimization for users.
Based on the beamforming structures designed
in Section III, we jointly determine the optimal power
normalization matrices $\mathbf{D}_m$ and $\mathbf{D'}_m$ for the $m$-th user's outage optimization without Hermitian
matrices assumption, i.e., $\min\limits_{\mathbf{D}_m,\mathbf{D'}_m} \text{Pro}\left(\mathcal{I}_m < 2 K_m R \right)$, where $R$ is the target data rate per
channel use. However, this problem is extremely complicated,
non-linear and non-convex and hence difficult to be solved. To obtain the approximate optimal solution,
we propose two iterative algorithms based on alternating optimization that updates one precoder at a time while fixing another.
It is assumed that the power constraints for the $m$-th user at both BS and relay are 1.
\vspace*{-1 em}
\subsection{Outage Optimization based on MMSE and ANOMAX}
We use the combination of the MMSE algorithm and ANOMAX algorithm \cite{Roemer} to obtain the optimal solution iteratively. For given $\mathbf{D'}_m$, using MMSE algorithm \cite{Joham} and based on  the equation \eqref{12}, we obtain
\begin{eqnarray}\label{14}
\mathbf{D}_m = \sqrt{\gamma_m} \mathbf{C}_m  ,
\end{eqnarray}
where $\mathbf{C}_m = \left(\tilde{\mathbf{F}}^H_m \tilde{\mathbf{F}}_m + \text{tr} \left(\tilde{\mathbf{R}}_m\right)\mathbf{I}_{K_m} \right)^{-1} \tilde{\mathbf{F}}^H_m$, $\tilde{\mathbf{F}}_m=\bar{\mathbf{T}}^H_m \mathbf{D'}_m
\bar{\mathbf{T}}_m \mathbf{T}_m$ and $\gamma_m $ is to ensure the power constraint equation \eqref{6}, i.e., $\gamma_m = 1 /\text{tr}\left(\mathbf{T}_m \mathbf{C}_m \mathbf{C}^H_m\right)$.

However, MMSE is not optimal relay precoding design strategy in the two-way relaying scenario as illustrated in \cite{Roemer}. Hence for given $\mathbf{D}_m$, we use ANOMAX algorithm to optimize the relay precoder, focusing on the effective channel gain in \eqref{12}. Therefore,
the optimization problem is expressed as
\begin{eqnarray}\label{15}
\max\limits_{\mathbf{D'}_m}  & \left\|\bar{\mathbf{T}}^H_m  \mathbf{D'}_m \bar{\mathbf{T}}_m \mathbf{T}_m \mathbf{D}_m \right\|^2_{F} \nonumber
\\ s.t. & \mathbf{P}^{ow}_{R_m}\leq 1, \; m=1,2,...,M.
\end{eqnarray}
We assume $\mathbf{D'}_m=\sqrt{\gamma'_m} \mathbf{C'}_m$  with normalization of $\|\mathbf{C'}_m\|^2_{F}=1 $  as in \cite{Roemer}, and to satisfy the power constraint, substituting it into the equation \eqref{10} yields
\begin{eqnarray}\label{16}
\gamma'_m = 1 /\text{tr}\left(\mathbf{C'}_m  \bar{\mathbf{T}}_m \left(\mathbf{T}_m \mathbf{D}_m \mathbf{D}^H_m  \mathbf{T}^H_m +\mathbf{I}_{K_m}\right)\bar{\mathbf{T}}^H_m \mathbf{C'}^H_m\right).
\end{eqnarray}
Using the fact $\|\mathbf{X}\|_F=\|\text{vec}\{\mathbf{X}\}\|_2$ and $\text{vec}\{\mathbf{X}\mathbf{Z}\mathbf{Y}\}=\left(\mathbf{Y}^T \bigotimes \mathbf{X}\right)\text{vec}\{\mathbf{Z}\}$ as in \cite{Roemer}, the objective function in \eqref{15} can be simplified as $\mathbf{c'}^H_m \mathbf{K}^H_m \mathbf{K}_m \mathbf{c'}_m$, ignoring the constant factor $\gamma'_m $, where $\mathbf{K}_m=\left(\bar{\mathbf{T}}_m  \mathbf{T}_m \mathbf{D}_m \right)^T \bigotimes \bar{\mathbf{T}}^H_m$, $\mathbf{c'}_m=\text{vec}\left(\mathbf{C'}_m\right)$ and $\text{vec}(\cdot)$ operator stacks the column vectors of the matrix into a column vector. Then it can be rewritten as $\frac{\mathbf{c'}^H_m \mathbf{K}^H_m \mathbf{K}_m \mathbf{c'}_m}{\mathbf{c'}^H_m \mathbf{c'}_m}$ due to the normalization. Hence, the maximum value of this function is the square of the largest singular value of $\mathbf{K}_m$ and
\begin{eqnarray}\label{17}
\mathbf{C'}_m = \text{unvec}(\mathbf{u}_m),
\end{eqnarray}
where $\mathbf{u}_m$ is the dominant right singular vector of matrix $\mathbf{K}_m$ corresponding to the largest singular value and $\text{unvec}(\cdot)$ denotes the opposite operation of $\text{vec}(\cdot) $.  In summary, we propose a solution as in \textbf{Algorithm} \ref{iterative algorithm 1}.
\vspace*{-1 em}
\begin{algorithm}
\label{iterative algorithm 1}
\caption{based on MMSE and ANOMAX}
$\cdot$ \textbf{Initialize} power normalized matrix $\mathbf{D}_m$ using \eqref{7}\;
$\cdot$ \textbf {Repeat}

\quad $-$ {\text{Compute} $\mathbf{D'}_m$ using  \eqref{16} and \eqref{17} \;
\quad $-$  \text{Compute} $\mathbf{D}_m$ using  \eqref{14} \;
\quad $-$  \text{Compute} $\mathbf{D'}_m$ \text{again} and \text{compute} $\mathcal{I}_m$ using \eqref{13} \;}
$\cdot$ \textbf {Until } {$\mathcal{I}_m$ converges.}

\end{algorithm}
\vspace*{-0.25 em}
\subsection{Outage Optimization based on ANMwoN}
Observe that the optimal relay precoder in \cite{Roemer} should
be  scaled up to satisfy the constraint with equality while
increasing the objective function, which contradicts the
optimality. Here we propose the ANMwoN strategy to solve this problem.

First, we solve the optimization problem for given $\mathbf{D'}_m$. To apply the ANMwoN strategy
to  the precoding design at the BS,  we extend
the beamforming structure $\mathbf{P}_m$ at the BS to be
\begin{eqnarray}\label{18}
\bar{\mathbf{P}}_m = \bar{\mathbf{U}}_{m_1} \bar{\mathbf{D}}_m
\bar{\mathbf{U}}^H_{m_1} \mathbf{G'}^H_m,
\end{eqnarray}
where $\bar{\mathbf{U}}_{m_1}\in\mathbb{C}^{N \times K_m}$ is obtained by the EVD of $\mathbf{Q}_m$, i.e., $\mathbf{Q}_m =
\begin{bmatrix} \bar{\mathbf{U}}_{m_1} & \bar{\mathbf{U}}_{m_0}\end{bmatrix} \begin{bmatrix}
\bar{\mathbf{\Sigma}}_m & 0 \\ 0 & 0 \end{bmatrix}
{\begin{bmatrix} \bar{\mathbf{U}}_{m_1} &
\bar{\mathbf{U}}_{m_0}\end{bmatrix}}^H $. Substituting $\bar{\mathbf{P}}_m$ into \eqref{6} instead of $\mathbf{P}_m$, i.e., $\mathbf{P}^{ow}_m=\text{tr}\left\{ \bar{\mathbf{D}}^H_m  \bar{\mathbf{D}}_m  \mathbb{T}^H_m \mathbb{T}_m \right\}$,  with Hermitian
matrix assumption for the initial power normalized matrix $\bar{\mathbf{D}}_m$, it can be
expressed as
\begin{eqnarray}\label{19}
\bar{\mathbf{D}}_m = \frac{1}{\sqrt{K_m}} \left( \mathbb{T}^H_m \mathbb{T}_m\right)^{-\frac{1}{2}},
\end{eqnarray}
where $\mathbb{T}_m=\mathbf{G'}_m \bar{\mathbf{U}}_{m_1}$, and the power constraint $\mathbf{P}^{ow}_m$ can be rewritten as $\mathbf{P}^{ow}_m=\bar{\mathbf{d}}^H_m  \bar{\mathbf{R}}_m \bar{\mathbf{d}}_m \leq 1$, where $\bar{\mathbf{R}}_m= \left( \mathbb{T}^H_m \mathbb{T}_m \right)^T  \bigotimes \mathbf{I}_{K_m} $ and $\bar{\mathbf{d}}_m=\text{vec}\left(\bar{\mathbf{D}}_m\right)$. Since the effective noise term is not related to $\mathbf{D}_m$ directly in \eqref{12}, we concentrate on the effective channel gain as \eqref{15} using $\mathbb{T}_m \bar{\mathbf{D}}_m \mathbb{T}^H_m $  instead of $\mathbf{T}_m \mathbf{D}_m$ as follows
\begin{eqnarray}\label{20}
\left\|\bar{\mathbf{T}}^H_m  \mathbf{D'}_m \bar{\mathbf{T}}_m
\mathbb{T}_m \bar{\mathbf{D}}_m \mathbb{T}^H_m \right\|^2_{F}.
\end{eqnarray}
 It can be simplified as $\bar{\mathbf{d}}^H_m \bar{\mathbf{K}}^H_m \bar{\mathbf{K}}_m \bar{\mathbf{d}}_m$ where $\bar{\mathbf{K}}_m= \left(\mathbb{T}^H_m \right)^T \bigotimes \left(\bar{\mathbf{T}}^H_m
\mathbf{D'}_m \bar{\mathbf{T}}_m \mathbb{T}^H_m \right)$. Obviously observe that the optimal vector $\tilde{\mathbf{d}}_m$ for the solution $\bar{\mathbf{d}}_m$ of such objective function  must satisfy the power constraint with equality. Hence we write this optimization problem for given $\mathbf{D'}_m$ as
\begin{eqnarray}\label{21}
 \max\limits_{\bar{\mathbf{d}}_m} & \bar{\mathbf{d}}^H_m \bar{\mathbf{K}}^H_m \bar{\mathbf{K}}_m \bar{\mathbf{d}}_m  \nonumber
\\
 s.t. & \bar{\mathbf{d}}^H_m  \bar{\mathbf{R}}_m \bar{\mathbf{d}}_m=1.
\end{eqnarray}
Then we put power constraint into the objective function as
\begin{eqnarray}\label{22}
\max\limits_{\bar{\mathbf{d}}_m} \; \frac{\bar{\mathbf{d}}^H_m \bar{\mathbf{K}}^H_m \bar{\mathbf{K}}_m \bar{\mathbf{d}}_m}{\bar{\mathbf{d}}^H_m  \bar{\mathbf{R}}_m \bar{\mathbf{d}}_m}.
\end{eqnarray}
The problem \eqref{22} is equal to \eqref{21} since the objective function is homogeneous and any scaling in $\bar{\mathbf{d}}_m$ does not change the optimality. We refer to this method as ANMwoN strategy.
\begin{Proposition}\label{Proposition2}
The optimal vector $\tilde{\mathbf{d}}_m$ of \eqref{22} is the dominant eigenvector of $\bar{\mathbf{R}}^{-1}_m \bar{\mathbf{K}}^H_m \bar{\mathbf{K}}_m$.
\end{Proposition}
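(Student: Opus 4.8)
The plan is to read \eqref{22} as a generalized Rayleigh quotient and reduce it to an ordinary Hermitian eigenvalue problem. First I would establish that $\bar{\mathbf{R}}_m=\left(\mathbb{T}^H_m\mathbb{T}_m\right)^T\bigotimes\mathbf{I}_{K_m}$ is Hermitian and positive definite. Indeed, $\bar{\mathbf{U}}_{m_1}$ has orthonormal columns spanning the range of the orthogonal projector $\mathbf{Q}_m$, which has rank $K_m$, and $\mathbf{G'}_m$ is full rank; hence the $K_m\times K_m$ matrix $\mathbb{T}_m=\mathbf{G'}_m\bar{\mathbf{U}}_{m_1}$ is nonsingular, so $\mathbb{T}^H_m\mathbb{T}_m$ is Hermitian positive definite, and the Kronecker product of two Hermitian positive definite matrices is again Hermitian positive definite. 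Consequently $\bar{\mathbf{R}}_m^{1/2}$ exists and is invertible.

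Second, I would substitute $\mathbf{e}_m=\bar{\mathbf{R}}_m^{1/2}\bar{\mathbf{d}}_m$, which turns the objective in \eqref{22} into the standard Rayleigh quotient $\mathbf{e}^H_m\mathbf{M}_m\mathbf{e}_m/\mathbf{e}^H_m\mathbf{e}_m$ with $\mathbf{M}_m=\bar{\mathbf{R}}_m^{-1/2}\bar{\mathbf{K}}^H_m\bar{\mathbf{K}}_m\bar{\mathbf{R}}_m^{-1/2}$, a Hermitian positive semidefinite matrix. By the Courant--Fischer (min--max) theorem its maximum over $\mathbf{e}_m\neq\mathbf{0}$ equals the largest eigenvalue $\lambda_{\max}$ of $\mathbf{M}_m$, attained at the associated dominant eigenvector $\mathbf{e}^\star_m$.

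Third, I would transform back via $\tilde{\mathbf{d}}_m=\bar{\mathbf{R}}_m^{-1/2}\mathbf{e}^\star_m$. Left-multiplying $\mathbf{M}_m\mathbf{e}^\star_m=\lambda_{\max}\mathbf{e}^\star_m$ by $\bar{\mathbf{R}}_m^{-1/2}$ gives $\bar{\mathbf{R}}_m^{-1}\bar{\mathbf{K}}^H_m\bar{\mathbf{K}}_m\tilde{\mathbf{d}}_m=\lambda_{\max}\tilde{\mathbf{d}}_m$. Since $\mathbf{M}_m$ and $\bar{\mathbf{R}}_m^{-1}\bar{\mathbf{K}}^H_m\bar{\mathbf{K}}_m$ are related by the similarity transform $\bar{\mathbf{R}}_m^{-1/2}(\cdot)\bar{\mathbf{R}}_m^{1/2}$, they share the same spectrum, so $\lambda_{\max}$ is the largest eigenvalue of $\bar{\mathbf{R}}_m^{-1}\bar{\mathbf{K}}^H_m\bar{\mathbf{K}}_m$ and $\tilde{\mathbf{d}}_m$ is its dominant eigenvector, proving the claim.

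Alternatively, one may apply Lagrange multipliers directly to \eqref{21}: stationarity of $\bar{\mathbf{d}}^H_m\bar{\mathbf{K}}^H_m\bar{\mathbf{K}}_m\bar{\mathbf{d}}_m-\lambda\left(\bar{\mathbf{d}}^H_m\bar{\mathbf{R}}_m\bar{\mathbf{d}}_m-1\right)$ yields the generalized eigenvalue equation $\bar{\mathbf{K}}^H_m\bar{\mathbf{K}}_m\bar{\mathbf{d}}_m=\lambda\bar{\mathbf{R}}_m\bar{\mathbf{d}}_m$, and evaluating the objective at such a point gives the value $\lambda$, so the maximum corresponds to the largest $\lambda$ and its eigenvector; the equivalence of \eqref{21} and \eqref{22} is the homogeneity remark already made in the text. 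I expect the only genuinely nontrivial point to be the invertibility and positive definiteness of $\bar{\mathbf{R}}_m$ (i.e., that $\mathbb{T}_m$ is nonsingular); everything afterwards is the textbook reduction of a generalized Hermitian eigenproblem.
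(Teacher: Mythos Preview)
Your proposal is correct. Your primary route---whitening by $\bar{\mathbf{R}}_m^{1/2}$ and invoking Courant--Fischer on the resulting ordinary Rayleigh quotient---is not the paper's argument, but your alternative via Lagrange multipliers on \eqref{21} essentially is: the paper writes $\lambda(\bar{\mathbf{d}}_m)=\bar{\mathbf{d}}_m^H\bar{\mathbf{K}}_m^H\bar{\mathbf{K}}_m\bar{\mathbf{d}}_m/\bar{\mathbf{d}}_m^H\bar{\mathbf{R}}_m\bar{\mathbf{d}}_m$, imposes the first-order KKT condition $\partial\lambda/\partial\bar{\mathbf{d}}^{*}=0$, obtains $\bar{\mathbf{K}}_m^H\bar{\mathbf{K}}_m\bar{\mathbf{d}}_m=\lambda(\bar{\mathbf{d}}_m)\bar{\mathbf{R}}_m\bar{\mathbf{d}}_m$, and concludes by taking the dominant eigenvector of $\bar{\mathbf{R}}_m^{-1}\bar{\mathbf{K}}_m^H\bar{\mathbf{K}}_m$. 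Your main approach buys you two things the paper leaves implicit: (i) you actually verify that $\bar{\mathbf{R}}_m$ is Hermitian positive definite (hence invertible), via the nonsingularity of $\mathbb{T}_m$; and (ii) Courant--Fischer gives a genuine sufficiency argument that the maximum is attained at the dominant eigenvector, whereas the paper's KKT step is only a necessary condition and its appeal to ``uniqueness'' is not spelled out. The paper's approach is shorter; yours is more complete.
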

\begin{proof}
Rewrite \eqref{22} as $\lambda\left(\bar{\mathbf{d}}_m\right)= \frac{\bar{\mathbf{d}}^H_m \bar{\mathbf{K}}^H_m \bar{\mathbf{K}}_m \bar{\mathbf{d}}_m}{\bar{\mathbf{d}}^H_m  \bar{\mathbf{R}}_m \bar{\mathbf{d}}_m} $. All local optimal vectors $\bar{\mathbf{d}}_m$ satisfy the first order necessary Karush-Kuhn-Tucker (KKT) condition as $ \frac{\partial \lambda \left(\bar{\mathbf{d}}\right)} {\partial \bar{\mathbf{d}}^{*}} = 0$, where $\left(\cdot\right)^\ast$ represents complex conjugation. Then we simplify KKT condition equation as $\bar{\mathbf{K}}^H_m \bar{\mathbf{K}}_m \bar{\mathbf{d}}_m = \lambda \left(\bar{\mathbf{d}}_m\right) \bar{\mathbf{R}}_m \bar{\mathbf{d}}_m$. Hence the optimal vector $\tilde{\mathbf{d}}_m$ is the dominant eigenvector
of the matrix $\bar{\mathbf{R}}^{-1}_m \bar{\mathbf{K}}^H_m \bar{\mathbf{K}}_m$  due to its uniqueness.
\end{proof}
We obtain the optimal $\bar{\mathbf{D}}_m$ satisfied the power constraint as
\begin{eqnarray}\label{23}
\bar{\mathbf{D}}_m =\text{unvec} \left( \frac{1}{\tilde{\mathbf{d}}^H_m  \bar{\mathbf{R}}_m \tilde{\mathbf{d}}}_m \tilde{\mathbf{d}}_m \right).
\end{eqnarray}

Second, we solve the optimization problem for given $\mathbf{D}_m$.  Since the effective noise term is related to $\mathbf{D'}_m$ directly in \eqref{12}, we
consider another objective function including the noise term. Since
the signal streams and noise streams at each user are correlated, it
is difficult to analyse them. Inspired by the concept of the
effective channel gain, we propose a relaxing objective function
based on the ECG2ENGR as
\begin{eqnarray}\label{24}
\frac {\left\|\bar{\mathbf{T}}^H_m  \mathbf{D'}_m \bar{\mathbf{T}}_m \mathbb{T}_m \bar{\mathbf{D}}_m \mathbb{T}^H_m  \right\|^2_{F}} {\text{tr}\left(\bar{\mathbf{T}}^H_m \mathbf{D'}_m \mathbf{D'}^H_m \bar{\mathbf{T}}_m \sigma^2_r +  \sigma^2_m \mathbf{I}_{K_m}\right)}.
\end{eqnarray}
Using ANMwoN method, we put the relay power constraint $\mathbf{P}^{ow}_{R_m}= \mathbf{d'}^H_m  \bar{\mathbf{R'}}_m \mathbf{d'}_m=1$ into the second term of denominator in \eqref{24}, where $\mathbf{d'}_m=\text{vec}\left(\mathbf{D'}_m\right)$ and $\bar{\mathbf{R'}}_m=\left( \bar{\mathbf{T}}_m \left(\mathbb{T}_m \mathbf{D}_m  \mathbb{T}^H_m  \mathbb{T}_m \mathbf{D}^H_m  \mathbb{T}^H_m +\mathbf{I}_{K_m}\right)\bar{\mathbf{T}}^H_m \right)^T \bigotimes \mathbf{I}_{K_m}$. Then the equation \eqref{24} can be simplified as $\frac{\mathbf{d'}^H_m\mathbb{K}^H_m \mathbb{K}_m \mathbf{d'}_m}{\mathbf{d'}^H_m  \tilde{\mathbf{K}}_m \mathbf{d'}_m}$, where $\mathbb{K}_m= \left( \bar{\mathbf{T}}_m \mathbb{T}_m \bar{\mathbf{D}}_m \mathbb{T}^H_m \right)^T \bigotimes \bar{\mathbf{T}}^H_m$ and $\tilde{\mathbf{K}}_m = \sigma^2_r \left( \mathbf{I}_{K_m} \bigotimes \left(\bar{\mathbf{T}}_m \bar{\mathbf{T}}^H_m\right) \right) + \sigma^2_m \bar{\mathbf{R'}}_m$. Hence the optimal matrix $\mathbf{D'}_m $ can be expressed as
\begin{eqnarray}\label{25}
\mathbf{D'}_m =\text{unvec} \left( \frac{1}{\tilde{\mathbf{d'}}^H_m  \bar{\mathbf{R'}}_m \tilde{\mathbf{d'}}_m} \tilde{\mathbf{d'}}_m\right),
\end{eqnarray}
where $\tilde{\mathbf{d'}}_m$ is the dominant eigenvector of $\tilde{\mathbf{K}}^{-1}_m \mathbb{K}^H_m \mathbb{K}_m$. In summary, we develop an algorithm based on ANMwoN and ECG2ENGR  as in \textbf{Algorithm} \ref{iterative algorithm 2}.
\vspace*{-1 em}
\begin{algorithm}
\label{iterative algorithm 2}
\caption{based on ANMwoN and ECG2ENGR}
$\cdot$ \textbf{Initialize} power normalized matrix $\bar{\mathbf{D}}_m$ using \eqref{19}\;
$\cdot$ \textbf {Repeat}

\quad $-$ {\text{Compute} $\mathbf{D'}_m$ using  \eqref{25}  \;
\quad $-$  \text{Compute} $\bar{\mathbf{D}}_m$ using  \eqref{23} \;
\quad $-$  \text{Compute} $\mathbf{D'}_m$ \text{again} and \text{compute} $\mathcal{I}_m$ using \eqref{13} \\
\quad with $\bar{\mathbf{F}}_m= \bar{\mathbf{T}}^H_m  \mathbf{D'}_m \bar{\mathbf{T}}_m \mathbb{T}_m \bar{\mathbf{D}}_m \mathbb{T}^H_m $\ instead of $\mathbf{F}_m$ \;}
$\cdot$ \textbf {Until } {$\mathcal{I}_m$ converges.}

\end{algorithm}
\vspace*{-2 em}

\section{Simulation results}
In this section,  we will perform simulations with the antenna
configuration as $N=4$, $M=2$ and $K_m=2$, $m=1,2$ and the
transmission data rate $R=1$ bit per
channel use (BPCU).   In our simulations, we set
$\sigma_m = \sigma_r = \sigma_{BS}=\sigma$ such that $\text{SNR}=
\frac{1}{\sigma^2}$ and consider the Rayleigh fading channels, the components of all channel matrices follow $\mathcal{CN}\left(0, 1\right)$. The proposed schemes are compared with P2PSA scheme
\cite{Ding}, and  the scheme based on the time sharing analog
network coding scheme,   where each user exchanges information
with BS in a round robin way, which needs two time slots for one
round robin.

In Fig. \ref{fig2},  we compare the outage probability
of users for the time sharing scheme, P2PSA scheme and BSA schemes.
We observe that the proposed BSA schemes can achieve better outage
performance than the schemes based on time sharing and P2PSA.
Since P2PSA is based on perfect diagonalization, its outage performance is
the worst. Meanwhile we see that BSA schemes with iterative
algorithms performance better than the BSA scheme with predefined precoders,
and the BSA scheme with algorithm 2 is the best among all BSA schemes since
it is based on ANMwoN and ECG2ENGR. In Fig. \ref{fig3}, we compare
their ergodic capacity performance. As shown from the figure, the
schemes based on BSA and P2PSA  are superior to the time sharing
scheme because the latter needs many time slots to accomplish
the whole transmission. BSA schemes yield a rates improvement
compared with the P2PSA scheme, due to the larger aligned space. Note that the BSA scheme with predefined precoders yields higher rates than BSA schemes with algorithm 1 and 2, because these two algorithms are designed to outage optimization for users, not optimal for maximizing the sum-rate of the whole system and they tend to concentrate most of the energy on the dominant singular value, not beneficial to the sum-rate over high SNRs.

\begin{figure}[!t]\centering
    \epsfig{file=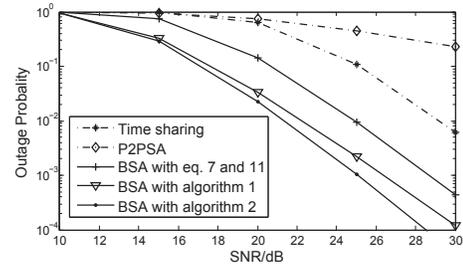, width=0.35\textwidth,clip=}
    \vspace*{-3mm}
    \caption{Outage Probability for users, $Pro(I_m<2 K_m R)$, $R=1$ BPCU, $N=4$, $M=2$ and $K_m=2$, $m=1,2$.  }
    \label{fig2}
    \vspace*{-1mm}
\end{figure}

\begin{figure}[!t]\centering
    \epsfig{file=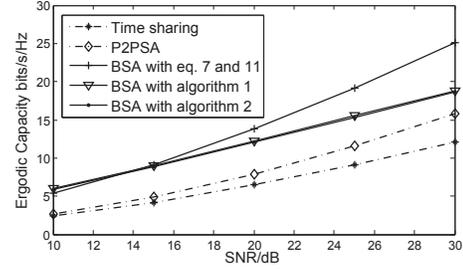, width=0.35\textwidth,clip=}
    \vspace*{-3mm}
    \caption{Ergodic Capacity,  $N=4$, $M=2$ and $K_m=2$, $m=1,2$.  }
    \label{fig3}
    \vspace*{-2mm}
\end{figure}

\section{Conclusion}
We have  developed a coordinated beamforming scheme
to null the inter-user interference for a MIMO cellular network
using a approach of BSA. Moreover, we have also  proposed two
algorithms to jointly optimize the beamforming matrices at the
relay and BS for outage optimization.  Simulation results have
been provided for performance evaluation.

\end{document}